\definecolor{orange}{rgb}{1,0.5,0}
\newcommand{\avgH}{|V(H)|\sqrt{\log |V(H)|}}
\newcommand{\poly}{\mathrm{poly}}
\newcommand{\opt}{\textsc{O}}
\newcommand{\loc}{\textsc{L}}
\newcommand{\ex}{\textsc{Ex}}
\newcommand{\cdiv}{c_{\mathrm{div}}}
\newcommand{\ce}{c_{\mathrm{ex}}}
\newtheorem{theorem}{Theorem}[section]
\newtheorem{lemma}[theorem]{Lemma}
\newtheorem{claim}[theorem]{Claim}
\newtheorem{definition}[theorem]{Definition}
\newtheorem{observation}[theorem]{Observation}
\title{
Local Search is a PTAS for Feedback Vertex Set in \\ Minor-free Graphs\footnote{This material is based upon work supported by the National Science Foundation under Grant No.\ CCF-1252833.}
}
\author[1]{Hung Le}
\author[2]{Baigong Zheng}
\affil[1,2]{Oregon State University}
\affil[1]{\texttt{lehu@onid.oregonstate.edu}}
\affil[2]{\texttt{zhengb@oregonstate.edu}}
\date{}
\begin{document}
\maketitle

\begin{abstract}
We show that a simple local search gives a PTAS for the Feedback Vertex Set (FVS) problem in minor-free graphs. An efficient PTAS in minor-free graphs was known for this problem by Fomin, Lokshtanov, Raman and Sauraubh~\cite{FLRS11}. However, their algorithm is a combination of many advanced algorithmic tools  such as contraction decomposition framework introduced by Demaine and Hajiaghayi~\cite{DH05}, Courcelle's theorem~\cite{Courcelle90} and the Robertson and Seymour decomposition~\cite{RS03}. In stark contrast, our local search algorithm is very simple and easy to implement. It keeps exchanging a constant number of vertices to improve the current solution until a local optimum is reached.  Our main contribution is to show that the local optimum only differs the global optimum by $(1+\epsilon)$ factor.
\end{abstract}

\section{Introduction}

Given an undirected graph, the {\em Feedback Vertex Set} (FVS) problem asks for a minimum set of vertices whose removal makes the graph acyclic. This problem arises in a variety of applications, including deadlock resolution, circuit testing, artificial intelligence, and analysis of manufacturing processes~\cite{ENSZ00}. Due to its importance, the  problem has been studied for a long time. It is one of Karp's 21 NP-complete problems~\cite{Karp72} and is still NP-hard even in planar graphs~\cite{Yannakakis78}. It is one of the two problems that motivates the development of the seminal contraction decomposition framework for designing polynomial time approximation schemes\footnote{ A polynomial-time approximation scheme for a minimization problem is an algorithm that, given a fixed constant $\epsilon > 0$, runs in polynomial time and returns a solution within $1 +\epsilon$ of optimal. } (PTASes) for many optimization problems in planar graphs~\cite{DH05}.

In general graphs, the current best approximation ratio for the FVS problem is 2 due to Becker and Geiger~\cite{BG96} and Bafna, Berman and Fujito~\cite{BBF99}. For some special classes of graphs, better approximation algorithms are known. Kleinberg and Kumar~\cite{KK01} gave the first PTAS for the FVS problem in planar graphs, followed by an efficient PTAS\footnote{A PTAS is efficient if the running time is of the form $2^{\poly(1/\epsilon)}n^{O(1)}$.} by Demaine and Hajiaghayi~\cite{DH05} which is generalizable to bounded genus graphs and single-crossing-minor-free graphs.  Recently, Cohen-Addad et al.~\cite{CCKMM16} gave a PTAS for the weighted version of this problem in bounded-genus graphs. By generalizing the contraction decomposition of Demaine and Hajiaghayi  to minor-free graphs,  Fomin, Lokshtanov, Raman and Sauraubh~\cite{FLRS11} obtained a PTAS for the FVS problem in this class of graphs. A graph is \emph{$H$-minor-free}, or simply minor-free, if it excludes some fixed graph $H$ as a minor. In this work, we assume that $|V(H)|$ is a constant.  We note that the class of minor-free graphs are vastly bigger than planar graphs and bounded-genus graphs. A typical example is the complete bipartite graph $K_{3,n}$ which has unbounded genus but is $K_5$-minor-free. In Section~\ref{sec:negative}, we show that in some sense, minor-free graphs are the limit for which we are still able to obtain a PTAS for this problem. 

A common theme in all known algorithms is complication in both implementation and analysis. The algorithm of Kleinberg and Kumar~\cite{KK01} is obtained by recursively applying the planar separator theorem by Lipton and Tarjan~\cite{LT79} and analyzing several special cases. The algorithm by Demaine and Hajiaghayi~\cite{DH05} employs the primal-dual relationship of planar graphs to decompose the graphs into several bounded treewidth instances, then applies dynamic programming (DP) to solve the FVS problem on bounded treewidth graphs. DP on bounded treewidth graphs is a very strong algorithmic tool. However, the implementation details typically are quite complicated. Additionally, the NP-hardness complexity of finding a tree decomposition of minimum width in planar graphs is still a long standing open problem. The algorithm of Cohen-Addad et al.~\cite{CCKMM16} for bounded-genus graphs is not simpler and has worst running time; however, it can work with node-weighted graphs. Given the complicated nature of the algorithms for planar and bounded-genus graphs, it is not surprising that the technical level of the algorithm by  Fomin, Lokshtanov, Raman and Sauraubh~\cite{FLRS11} for minor-free graphs is much higher. It uses  advanced tools such as  Courcelle's theorem~\cite{Courcelle90} and the Robertson and Seymour decomposition~\cite{RS03}. We note that the decomposition of Robertson and Seymour was built through a series of papers which span 20 years with several hundred pages~\cite{RS83,RS03}. Thus, even understanding Robertson and Seymour decomposition is a real challenge, let alone implementing it. All of this motivates our current work.

We show that a simple local search algorithm gives a PTAS for the FVS problem in minor-free graphs. The algorithm is depicted in Algorithm~\ref{alg:localsearch}. Intuitively, the local search algorithm starts with an arbitrary solution for the problem and tries to change a constant number (depending on $\epsilon$) of vertices in the current solution to obtain a better solution. The algorithm outputs the current solution when it cannot obtain a better solution in this way.

Local search is among the most successful heuristics in combinatorial optimization, partly due to its simplicity. It has been applied to  scheduling, graph coloring, graph partitioning,  Hopfield networks; we refer readers to the monograph by Michiels, Aarts and Korst~\cite{MAK10} for more details. However, one of the hardest questions regarding local search is the performance guarantee. We provide an answer this question for the FVS problem. The analysis of our algorithm is simple, but non-trivial:  it only uses two  well-known properties of $H$-minor-free graphs as black boxes, namely sparsity and separability,  and can be described in about four pages.  A key ingredient in our analysis is the introduction of Steiner vertices into the construction of exchange graphs which is different from all previous works~\cite{CH09,MR10,CG15}; we defer further details of this discussion to Subsection~\ref{subsec:our-analysis}.

\begin{algorithm}
\caption{\textsc{LocalSearch}$(G, \epsilon)$}
\label{alg:localsearch}
\begin{algorithmic}[1]
\State $S\leftarrow$ an arbitrary solution of $G$
\State $c \leftarrow$ a constant depending on $\epsilon$
\While{
there is a solution $S'$ such that $| S \setminus S'| \le c$, $|S' \setminus S| \le c$ and $|S'| < |S|$
}
\State $S \leftarrow S'$
\EndWhile
\State output $S$
\end{algorithmic}
\end{algorithm}

\begin{theorem}\label{thm:main} For any fixed $\epsilon > 0$, there is a local search algorithm that finds an $(1+\epsilon)$-approximate solution for the FVS problem in $H$-minor-free graphs with running time $O(n^c)$ where $c = \frac{\poly(|V(H)|)}{\epsilon^2} $.
\end{theorem}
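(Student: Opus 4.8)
The plan is to compare the output $\loc$ of Algorithm~\ref{alg:localsearch} against a fixed optimal feedback vertex set $\opt$ and prove $|\loc|\le(1+\epsilon)|\opt|$, following the exchange-graph paradigm for local-search PTASes in minor-closed families. First I would reduce to the case $\opt\cap\loc=\emptyset$: deleting the common vertices from $G$ leaves $\opt\setminus\loc$ optimal and $\loc\setminus\opt$ still locally optimal in the smaller graph (any improving swap there lifts back to $G$), and a $(1+\epsilon)$-bound in the smaller graph gives one in $G$. So assume $\opt\cap\loc=\emptyset$. The heart of the argument is the construction of an \emph{exchange graph} $\mathcal G$ whose vertex set is $\opt\cup\loc$ together with a set $T$ of \emph{Steiner vertices} chosen from $V(G)\setminus(\opt\cup\loc)$, with $|T|=O(|\opt|+|\loc|)$, and whose edges are designed to encode how the two forests $G-\opt$ and $G-\loc$ connect the solution vertices to one another, routed through $T$.

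I would then establish two properties. \textbf{Sparsity and separability.} Because $\mathcal G$ is obtained from $G$ by a bounded number of minor-type operations (contracting pieces of the two forests, smoothing out degree-two Steiner vertices, and so on), it excludes a fixed clique $K_h$ as a minor with $h=\poly(|V(H)|)$; hence by the Kostochka--Thomason sparsity bound it has $O(\poly(|V(H)|)\cdot|V(\mathcal G)|)$ edges, and by the Alon--Seymour--Thomas separator theorem it has balanced separators of size $O(\poly(|V(H)|)\cdot\sqrt{|V(\mathcal G)|})$. \textbf{Exchange property.} For every part $R\subseteq V(\mathcal G)\setminus T$, letting $\partial R$ be the vertices of $R$ with a neighbour outside $R$ in $\mathcal G$, the set $(\loc\setminus R)\cup(\opt\cap R)\cup(\loc\cap\partial R)$ is again a feedback vertex set of $G$. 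The exchange property is the delicate point, and I expect it to be the main obstacle. For vertex-cover- or dominating-set-type problems, feasibility is ``local'' (per edge or per vertex), so a plain bipartite exchange graph on $\opt\cup\loc$ already certifies swaps; but FVS-feasibility is global, and reinserting $\loc\cap R$ into the forest $G-\loc$ recreates cycles that may wander far outside $R$. The purpose of the Steiner vertices is precisely to render those cycles visible within $R\cup\partial R$, so that $\opt\cap R$ is guaranteed to intersect each of them (using that $\opt$ hits every cycle of $G$), while at the same time $\mathcal G$ remains $K_h$-minor-free; making both hold simultaneously is where the real work lies.

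Given the two properties, the remainder is routine. Applying the separator theorem recursively to $\mathcal G$ partitions $V(\mathcal G)\setminus T$ into parts $R_1,\dots,R_t$ with each $|R_i|\le r$ and with the total number of edges of $\mathcal G$ between distinct parts bounded by $O(\poly(|V(H)|)\cdot|V(\mathcal G)|/\sqrt{r})$. By the exchange property, each $R_i$ yields a swap for Algorithm~\ref{alg:localsearch} that removes at most $r$ vertices and adds at most $r$ vertices, hence is eligible for $c=r$; since $\loc$ is locally optimal, $|\loc\cap R_i|\le|\opt\cap R_i|+|\loc\cap\partial R_i|$ for every $i$. Summing over the partition, using $\sum_i|\loc\cap R_i|=|\loc|$, $\sum_i|\opt\cap R_i|=|\opt|$, $\sum_i|\loc\cap\partial R_i|=O(\poly(|V(H)|)\cdot|V(\mathcal G)|/\sqrt{r})$ and $|V(\mathcal G)|=O(|\opt|+|\loc|)$, we get
\[
|\loc|\ \le\ |\opt|\;+\;O\!\left(\frac{\poly(|V(H)|)}{\sqrt{r}}\right)\bigl(|\opt|+|\loc|\bigr),
\]
which rearranges to $|\loc|\le(1+\epsilon)|\opt|$ as soon as $r=\Theta(\poly(|V(H)|)/\epsilon^2)$; this fixes the constant $c$ of the algorithm. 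Finally, the running time: each pass of the while-loop enumerates the $O(n^{2c})$ candidate solutions $S'$ with $|S\setminus S'|\le c$ and $|S'\setminus S|\le c$ and tests acyclicity of $G-S'$ in linear time, and $|S|$ strictly decreases at each pass, so there are at most $n$ passes, for a total of $n^{O(c)}=O(n^{c})$ with $c=\poly(|V(H)|)/\epsilon^2$.
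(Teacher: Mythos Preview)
Your outline follows the same paradigm as the paper: build an exchange graph on $\opt\cup\loc$ together with Steiner vertices, keep it a minor of $G$ so that it is $H$-minor-free and hence admits an $r$-division, and then compare $\loc$ to $\opt$ region by region using local optimality. You also correctly single out the two crucial points: that Steiner vertices are needed (the paper's main novelty), and that establishing the exchange/feasibility property is the heart of the matter. Your reduction to $\opt\cap\loc=\emptyset$ is valid; the paper instead keeps $\opt\cap\loc$ and handles it through an extra clause in the exchange property, but this is only cosmetic.

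There is, however, a genuine technical mismatch in the exchange property you state. You take $R\subseteq V(\mathcal G)\setminus T$ (solution vertices only) and swap to $(\loc\setminus R)\cup(\opt\cap R)\cup(\loc\cap\partial R)$, i.e., you retain only the $\loc$-vertices on the boundary. In the paper the regions $R_i$ of the $r$-division are taken in the \emph{full} exchange graph (Steiner vertices included), and the swap is $(\loc\setminus R_i)\cup(\opt\cap R_i)\cup B_i$ where $B_i$ is the \emph{entire} boundary of the region, Steiner vertices and all. This difference is not cosmetic. The paper's exchange graph guarantees, for each cycle $C$ of $G$, a witness cycle $C'$ in $\Ex$ with $V(C')\subseteq V(C)$; when $C'$ straddles a region boundary, the crossing vertex $w\in C'\cap B_i$ may well be a Steiner vertex, and it is precisely by putting $w$ into the swap set that $C$ gets covered (since $w\in V(C)$). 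If you keep only $\loc\cap\partial R$ on the boundary, this argument fails. Conversely, with your definition of $\partial R$ as ``vertices of $R$ with a neighbour outside $R$'', any solution vertex adjacent to a Steiner vertex (even one in the same region) lands in $\partial R$, so $\sum_i|\loc\cap\partial R_i|$ is not controlled by the $r$-division boundary bound. Either way the accounting breaks; the fix is exactly the paper's: run the $r$-division on all of $\Ex$ and add the full region boundary $B_i$ into the swap.

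One minor point: the exchange graph is a minor of $G$ and therefore $H$-minor-free with the \emph{same} $H$; you do not need to pass to $K_h$ with $h=\poly(|V(H)|)$. The $\poly(|V(H)|)$ in the exponent arises instead from the constants $\cdiv$ in the $r$-division and $\ce=O(\sigma_H)$ in the exchange-graph size bound $|V(\Ex)|\le \ce(|\opt|+|\loc|)$.
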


Beside simplicity, our algorithm has two other interesting properties. First, to run the algorithm, we do not need to know beforehand whether the graph under consideration is minor-free or not; it will give a PTAS in case the graph is minor-free. All known algorithms discussed above need to test topological properties of the graph, such as planarity, genus-boundedness or minor-freeness, to be able to decide whether the algorithms are applicable. Except for planarity, other testings are quite expensive~\cite{johnson87,Mohar99}. Second, the dependency of the exponent of  the running time in our algorithm on the size of the minor is $\poly(|V(H)|)$, or $O(|V(H)|^{3/2})$ precisely while the constant behind the big-O in the running time of the algorithm by  Fomin, Lokshtanov, Raman and Sauraubh~\cite{FLRS11} is a tower function of $|V(H)|$.  Even when $|V(H)| = 5$, the constant is still bigger than the size of the universe~\cite{johnson87}.

Perhaps the only drawback of our result is the running time dependency on $\epsilon$, which is roughly $n^{O(\frac{1}{\epsilon^2})}$. However, our result should be seen as the first step toward theoretically understanding of the power of local search for the FVS problem: as long as we are willing to pay for computational time,  we are guaranteed to get better approximation ratio.  For APX-hard problems, such as the FVS problem, there is a limit to which, if one increases the neighborhood size, the gain in approximation is zero or negligible. Thus, a natural question is:  when the input has some structural properties, would it be possible to obtain better approximation ratio when the neighborhood size increases? A yes answer to this question would be quite significant in practice because real instances typically have some structural properties and the local search algorithm does not need to test such properties.  Our Theorem~\ref{thm:main} provides a yes answer to this question, when the structure of the input is minor-free. Also,  in practice, one often runs local search with $c = 4$ or $5$ ($c$ is in line 2 of Algorithm~\ref{alg:localsearch}.). It will be interesting to know, even in planar graphs, when $c = 4$ or $5$, what is the approximation guarantee we can obtain? Indeed, there have been some recent work~\cite{AMM17,BGMR17} toward this direction for optimization problems admitting local search PTASes (with the same running time as our algorithm in Theorem~\ref{thm:main}). Our Theorem~\ref{thm:main} says that there has to be a constant $c$ such that when we apply local search to planar graphs with $c$, we will beat the best known 2-approximation algorithm for general graphs~\cite{BG96,BBF99}. We leave the problem of determining the exact constant $c$ as an open problem for future research. Finally, we would like to point out that local search was experimentally applied to the FVS problem with good results~\cite{ZYZS13,QZ14}. In a certain sense, our result helps justifying for them.

To complement our positive result, we provide several negative results. The work of Har-Peled and Quanrud~\cite{HQ15} shows that local search provides PTASes for several problems, including vertex cover, independent set, dominating set and connected dominating set, in graphs with polynomial expansion (all of these problems are known to have PTASes in minor-free graphs.). Minor-free graphs are a special case of graphs with polynomial expansion. Thus, their work gives a hope that local search can be used to generalize known PTASes for optimization problems from minor-free graphs to graphs of polynomial expansion. However, our first negative result refuses this hypothesis. By a simple reduction, we show that the FVS problem is APX-hard in 1-planar graphs.  Note that $1$-planar graphs also are a special case of graphs of polynomial expansion.  Second, we show that two closely related variants of the FVS problem, namely:  {\em odd cycle transversal} and {\em subset feedback vertex set}, do not have such simple local search PTASes, even in planar graphs. We remark that these two problems are not known to have PTASes   in planar graphs. 

\subsection{Our analysis technique}\label{subsec:our-analysis}

To better put our technique into context, we briefly discuss previous work. Chan and Har-Peled~\cite{CH09} and Mustafa and Ray~\cite{MR10} independently showed that a simple local search gives PTASes for many geometric problems. Cabello and Gajser~\cite{CG15} observed that  the same local search can be used to design PTASes for the maximum independent set, the minimum vertex cover and minimum dominating set problems in minor-free graphs.  Cohen-Addad, Klein and Mathieu~\cite{CKM16} showed that local search yields PTASes for $k$-means, $k$-median and uniform uncapacitated facility location in minor-free graphs. In analyzing local search algorithms, one typically relies on an {\em exchange graph} constructed from the optimal solution\footnote{For $k$-means and $k$-median, the exchange graph is constructed from $\loc$ and a nearly optimal solution $\opt'$, which is obtained by removing some vertices of $\opt$.} $\opt$ and the local search solution $\loc$.  For independent set and vertex cover, the exchange graph is the subgraph induced by $\opt \cup \loc$, and for other problems, the exchange graph is obtained  by contracting each vertex of $V(G)\setminus (O\cup L)$ to a nearest vertex in $\opt \cup \loc$.  Then local properties of these problems naturally appear in the exchange graphs: if we consider a small neighborhood $R$ in the exchange graph and replace the vertices of $\loc$ in $R$ with the vertices of $\opt$ in $R$ and its the boundary, the resulting vertex set is still a feasible solution. By decomposing the exchange graph into small neighborhoods, we can bound the size of $\loc$ by the size of $\opt$ and the total size of the boundaries of these neighborhoods.

However, the FVS problem does not have such local properties and hence, just simply deleting vertices and contracting edges do not give us an exchange graph. This is because for a cycle $C$ in the original graph, the vertex of $\loc$ that covers $C$ may be inside of some neighborhood but the vertex of $\opt$ that covers $C$ may be outside of that neighborhood. One may try to argue the boundary of the neighborhood could cover $C$. But unfortunately, the boundary may not be helpful since the crossing vertices of $C$ and the boundary may not be in both solutions and then they may be deleted or contracted to other vertices.

To solve this problem, we construct an exchange graph with the following property: for any cycle $C$ of the original graph, in our exchange graph, there is (i) a vertex in $\opt \cap \loc \cap C$, or  (ii) an edge between a vertex in $\opt \cap C$ and a vertex in $\loc \cap C$, or  (iii) another cycle $C'$ such that vertices in $C'$ is a subset of vertices in $C$ and $C' \cap (\opt \cup \loc) = C \cap (\opt \cup \loc)$.  Property (i) and/or (ii) are typically achieved in previous analyses~\cite{CG15,CKM16} by vertex deletion or edge contraction.  It is property (iii) that is specific to our problem and is a main challenge.  To additionally achieve this property, we need to introduce vertices, called \emph{Steiner vertices}, that are not in both solutions, into the exchange graph. Meanwhile, we need to guarantee that the number of such vertices is linear to the size of $\opt \cup \loc$.  The linear size bound is essential to the correctness of our algorithm and we prove this size bound by a structural lemma (Lemma~\ref{lm:structure}) which may be of independent interest.

In summary, this is the first time Steiner vertices are proved useful in analyzing local search due to the non-local nature of the FVS problem. Given that many optimization problems, such as  minor covering and packing problems~\cite{FLRS11}, exhibit the same non-local properties, we believe that our technique is useful in studying the local search algorithm for these problems as well.

\section{Preliminaries}

For a graph $G$, we denote the vertex set and the edge set of $G$ by $V(G)$ and $E(G)$, respectively. For a subgraph $H$ of $G$, the {\em boundary} of $H$ is the set of vertices that are in $H$ but have at least one incident edge that is not in $H$. We denote by $int(H)$ the set of vertices of $H$ that are not in the boundary of $H$. The {\em degree} of a vertex is the number of its incident edges. 

A graph $H$ is a \emph{minor} of $G$ if $H$ can be obtained from $G$ by a sequence of vertex deletions, edge deletions and edge contractions. $G$ is \emph{$H$-minor-free}, if $G$ does not contain a fixed graph $H$ as a minor.  We sometimes call  $H$-minor-free graphs \emph{minor-free graphs} when the size of $H$ is not relevant. It is well known~\cite{Kostochka82,Kostochka84} that $H$-minor-free graph is sparse;  an $H$-minor-free graph with $n$ vertices has at most $O(\sigma_H n)$ edges where $\sigma_H = \avgH$. 

A {\em balanced separator} of a graph is a set of vertices whose removal partitions the graph roughly in half.
A separator theorem typically provides bounds for the size of each part and the size of the balanced separator. Usually, the size of the balanced separator is sublinear w.r.t. the size of the graph. Separator theorems have been found for planar graphs~\cite{LT79}, bounded-genus graphs~\cite{GHT84}, and minor-free graphs~\cite{AST90}.
 
An {\em $r$-division} is a decomposition of a graph, which was first introduced by Frederickson~\cite{Frederickson87} for planar graphs to speed up planar shortest path algorithms.
\begin{definition}\label{def:r-div}
For an integer $r$, an $r$-division of a graph $G$ is a collection of edge-disjoint subgraphs of $G$, called {\em regions}, with the following properties:
\begin{itemize}[noitemsep]
\item[1.] Each region contains at most $r$ vertices and each vertex is contained in at least one region.
\item[2.] The number of regions is at most $\cdiv\frac{n}{r}$. 
\item[3.] The number of boundary vertices, summed over all regions, is at most $\cdiv\frac{n}{\sqrt{r}}$.
\end{itemize}
where $\cdiv$ is a constant. 
\end{definition}

We say a graph is {\em $r$-divisible} if it has an $r$-division. A graph is \emph{divisible} if it is $r$-divisible for every $r$. Given any $r$ and a planar graph $G$, Frederickson~\cite{Frederickson87} gave a construction for the $r$-division of $G$ that only relies on the planar separator theorem~\cite{LT79}. It is straightforward to extend the construction to any family of graphs with balanced separators of sublinear size. Since $H$-minor-free graphs are known to have balanced separators~\cite{AST90}, $H$-minor-free graphs are divisible with $\cdiv = \poly(|V(H)|)$.

\section{Exchange graphs imply PTASes by local search}

In this section, we show that if for a minor-free graph $G$, we can construct another graph, called {\em exchange graph}, such that it is divisible, then Algorithm~\ref{alg:localsearch} is a PTAS for the FVS problem. Let $\opt$ be an optimal solution of the FVS problem and $\loc$ be the output of the local search algorithm. We say a vertex $u$ a \emph{solution vertex} if $ u \in \opt \cup \loc$ and a \emph{Steiner vertex} otherwise. Unlike prior works~\cite{CG15,HQ15}, we allow \emph{Steiner vertices} in our exchange graphs.

\begin{definition}\label{def:exch} 
A graph $\ex$ is an exchange graph for optimal solution $\opt$ and local solution $\loc$ of the FVS problem in a graph $G$ if it satisfies the following properties:
\begin{enumerate}
\item[(1)] $\loc \cup \opt \subseteq V(\ex) \subseteq V(G)$.
\item[(2)] $|V(\ex)| \leq \ce (|\loc| + |\opt|)$ for some constant $\ce$. 
\item[(3)] For every cycle $C$ of $G$, there is (3a) a vertex of $C$ in $\opt \cap \loc$ or (3b) an edge $uv \in E(\ex)$ between a vertex $u \in \loc \cap C$ and a vertex $v \in \opt \cap C$ or (3c) a cycle $C'$ of $\ex$ such that $V(C')\subseteq V(C)$ and $C \cap (\opt \cup \loc) = C' \cap (\opt \cup \loc)$.
\end{enumerate}
\end{definition}

We now prove Theorem~\ref{thm:main} given that we can construct a \emph{divisible exchange graph} for $G$. The details of the construction will be given in Section~\ref{sec:ex-const}.

\begin{proof}[Proof of Theorem~\ref{thm:main}]
We set the constant $c$ in line 2 of Algorithm~\ref{alg:localsearch} to be $1/\delta^2$ where $\delta = \frac{\epsilon}{2\cdiv \ce(2+\epsilon)} = O(\frac{\epsilon}{\cdiv \ce})$. Note that $\cdiv$ and $\ce$ are constants in Definition~\ref{def:r-div} and Definition~\ref{def:exch}, respectively. Since in each iteration, the size of the solution is reduced by at least one, there are at most $n$ iterations. Since each iteration can be implemented in $n^{O(c)}$ time by enumerating all possibilities, the total running time is $n^{O(c)} = n^{O(1/\epsilon^2)}$. We now show that the output $L$ has size at most $(1+\epsilon)|\opt|$. 
 
Let $\ex$ be a divisible exchange graph  for  $\opt$ and $\loc$.  We find an $r$-division of $\ex$ for $r = c = 1/\delta^2$. Let $B$ be the multi-set containing all the boundary vertices in the $r$-division. By the third property in Defintion~\ref{def:r-div}, $|B|$ is at most $ \cdiv \frac{|V(\ex)|}{\sqrt{r}}$. By the second property in Definition~\ref{def:exch}, $|V(\ex)| \leq \ce (|\opt| + |\loc|)$. Thus, $|B| \le \cdiv \ce \delta (|\opt| + |\loc|)$. In the following, we will show that:
\begin{align}
|\loc| \le |\opt| + 2 |B|
\label{equ:2}
\end{align}
If so, we have:
\begin{align*}
|L| \leq |\opt| + 2 \cdiv \ce \delta (|\opt| + |\loc|)  = |\opt|  + \frac{\epsilon}{2+\epsilon}(|\opt| + |\loc|)
\end{align*}
that implies $|L| \leq (1+\epsilon)|\opt|$.

To prove Equation~\eqref{equ:2}, we study some properties of $\ex$. For any region $R_i$ of the $r$-division, let $B_i$ be the boundary of $R_i$ and $M_i = (\loc \setminus R_i) \cup (\opt \cap R_i) \cup B_i$.

\begin{claim}
$M_i$ is a feedback vertex set of $G$.
\end{claim}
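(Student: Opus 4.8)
The plan is to show that $M_i = (\loc \setminus R_i) \cup (\opt \cap R_i) \cup B_i$ hits every cycle $C$ of $G$, using the three alternatives provided by property (3) of the exchange graph. Fix an arbitrary cycle $C$ in $G$; I need to exhibit a vertex of $C$ that lies in $M_i$. The argument splits according to which of (3a), (3b), (3c) applies to $C$.

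First, the easy cases. If (3a) holds, there is a vertex $v \in \opt \cap \loc \cap C$; since $v \in \loc$, either $v \notin R_i$ (so $v \in \loc \setminus R_i \subseteq M_i$) or $v \in R_i$, and then since $v \in \opt$ we get $v \in \opt \cap R_i \subseteq M_i$. Either way $C$ is hit. If (3b) holds, there is an edge $uv \in E(\ex)$ with $u \in \loc \cap C$ and $v \in \opt \cap C$. If $u \notin R_i$ then $u \in M_i$ and we are done; similarly if $v \in R_i$ then $v \in M_i$. So assume $u \in R_i$ and $v \notin R_i$. Then the edge $uv$ has one endpoint inside $R_i$ and one outside; here I need to be slightly careful, since regions are edge-disjoint subgraphs and a vertex ``in $R_i$'' should mean a vertex of $V(R_i)$. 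Because $u \in V(R_i)$ but $u$ has an incident edge ($uv$, which cannot lie in $R_i$ as $v \notin V(R_i)$) that is not in $R_i$, the vertex $u$ is by definition a boundary vertex of $R_i$, i.e. $u \in B_i \subseteq M_i$. (One subtlety: the edge $uv$ is an edge of $\ex$, and $R_i$ is a region of an $r$-division of $\ex$, so this boundary reasoning takes place in $\ex$; since $u \in C$ it is still a vertex of the cycle $C$ of $G$, which is all we need.) Thus $C$ is hit.

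The main case is (3c): there is a cycle $C'$ of $\ex$ with $V(C') \subseteq V(C)$ and $C \cap (\opt \cup \loc) = C' \cap (\opt \cup \loc)$. Since $C'$ is a cycle in $\ex$, and the regions of the $r$-division cover all edges of $\ex$, the edge set of $C'$ is distributed among the regions. If all edges of $C'$ lie in a single region $R_j$: when $j \neq i$, then $V(C') \subseteq V(R_j)$, and in particular every vertex of $C'$ in $\loc$ lies in $R_j$, hence not in $R_i$ unless it is shared — actually I want to argue that $\loc \cap C' \subseteq \loc \setminus R_i$ fails only if $\loc \cap C' \cap R_i \neq \emptyset$, so let me instead note: if $\loc \cap C = \emptyset$ then $\loc \cap C' = \emptyset$ too, but $\loc$ is a feedback vertex set so it must hit $C$, contradiction; hence $\loc \cap C \neq \emptyset$. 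Pick $w \in \loc \cap C = \loc \cap C'$. If $w \notin R_i$ we're done ($w \in M_i$). If $w \in R_i$, I want to instead find a vertex of $\opt \cap C$; since $\opt$ is also a feedback vertex set, $\opt \cap C = \opt \cap C' \neq \emptyset$, pick $x \in \opt \cap C'$; if $x \in R_i$ then $x \in \opt \cap R_i \subseteq M_i$, done. The remaining possibility is $w \in R_i$ and $x \notin R_i$ for every such choice — then I use that $C'$ is connected: walk along $C'$ from $w$ (inside $R_i$, or at least $w \in V(R_i)$) to $x$ (with $x \notin V(R_i)$); somewhere along this walk there is an edge of $C'$ crossing out of $R_i$, i.e. with one endpoint in $V(R_i)$ and one not, and that endpoint in $V(R_i)$ is a boundary vertex of the region containing that edge. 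Hmm — but that boundary vertex need not be in $B_i$; it is a vertex of $V(R_i)$ lying on the boundary, so it IS in $B_i$ by definition of the region's boundary (a vertex of $R_i$ with an incident edge not in $R_i$). And it is a vertex of $C' \subseteq C$. So $C$ is hit by $B_i \subseteq M_i$.

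So the crux — and the step I expect to be the main obstacle to write cleanly — is the (3c) case: carefully tracking the distinction between ``vertex of region $R_i$'' and ``edge of region $R_i$'', and arguing via connectivity of $C'$ that if $C'$ has a $\loc$-vertex in (the vertex set of) $R_i$ but no $\opt$-vertex in $R_i$, then $C'$ must pass through a boundary vertex of $R_i$. The clean way to phrase it: consider the vertices of $C'$; if some vertex of $C' \cap \loc$ is not in $V(R_i)$ we are done; if some vertex of $C' \cap \opt$ is in $V(R_i)$ we are done; otherwise $C' \cap \loc \subseteq V(R_i)$ (nonempty, as argued) and $C' \cap \opt \cap V(R_i) = \emptyset$ while $C' \cap \opt \neq \emptyset$, so $C'$ contains both a vertex in $V(R_i)$ and a vertex not in $V(R_i)$; traversing the cycle $C'$ we find an edge $e$ of $C'$ with exactly one endpoint $y \in V(R_i)$; the region $R_j$ containing $e$ then has $y$ on its boundary, but we want $y \in B_i$, which holds because $y \in V(R_i)$ has the incident edge $e \notin E(R_i)$ (as $e \in E(R_j)$ and regions are edge-disjoint, or more directly because the other endpoint of $e$ is outside $V(R_i)$ so $e \notin E(R_i)$), hence $y \in B_i \subseteq M_i$, and $y \in V(C') \subseteq V(C)$. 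This completes the claim in all cases, so $M_i$ is a feedback vertex set of $G$.
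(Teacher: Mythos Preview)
Your proof is correct and follows essentially the same approach as the paper's. The paper argues by contradiction---assuming $C$ avoids $M_i$ immediately forces $C\cap\loc\subseteq int(R_i)$ and $C\cap\opt\subseteq V(\ex)\setminus R_i$, which rules out (3a) and (3b) in one stroke and leaves only the (3c) boundary-crossing argument---whereas you do a direct case split on (3a)/(3b)/(3c); the core step (the cycle $C'$ has a vertex in $V(R_i)$ and one outside, hence meets $B_i$) is identical.
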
  
\begin{proof}
For a contradiction, assume that there is a cycle $C$ of $G$ that is not covered by $M_i$. Then $C$ does not contain any vertex of $\loc \setminus R_i$, $\opt \cap R_i$ and $B_i$.
So $C$ can only be covered by some vertices of $(\loc \setminus \opt) \cap int(R_i)$ and some vertices of $\opt \setminus (\loc \cup R_i)$. This implies that $C$ does not contain any vertex of $\opt \cap \loc$ and there is no edge in $\ex$ between $C\cap \opt$ and $C \cap \loc$.
By the third property of exchange graph, there must be a cycle $C'$ in $\ex$ such that $V(C') \subseteq V(C)$ and $C \cap (\opt \cup \loc) = C' \cap (\opt \cup \loc)$.
Let $u$ be the vertex of $(\loc \setminus \opt) \cap int(R_i)$ in $C$ and $v$ be the vertex of $\opt \setminus (\loc \cup R_i)$ in $C$.
Then cycle $C'$ contains both $u$ and $v$, which implies $C'$ crosses the boundary of $R_i$, that is $C' \cap B_i \ne \emptyset$.
Let $w$ be a vertex in $C' \cap B_i$, then $w$ also belongs to $C$ in $G$.
This implies $M_i$ contains a vertex of $C$, a contradiction.
\end{proof}
By the construction of $M_i$, 
we know the difference between $\loc$ and $M_i$ is bounded by the size of the region $R_i$, that is $r$. Recall that $c = r = 1/\delta^2$. Since $\loc$ is the output of Algorithm~\ref{alg:localsearch}, it cannot be improved by changing at most $r$ vertices. Thus, we have $|\loc| \le |M_i|$.
By the construction of $M_i$, this implies 
$$|\loc \cap R_i| \le |M_i \cap R_i| \le |\opt \cap int(R_i)| + |B_i|.$$
Thus, we have:
$$|\loc \cap int(R_i)| \le |\loc \cap R_i| \le |\opt \cap int(R_i)| + |B_i|.$$  
Since $int(R_i)$ and $int(R_j)$ are vertex-disjoint for any two distinct $i$ and $j$, by summing over all regions in the $r$-division, we get
$$|\loc| - |B| \le \sum_i |\loc \cap int(R_i)| \le \sum_i (|\opt \cap int(R_i)| + |B_i|) \le |\opt| +  |B|.$$
This proves Equation~\eqref{equ:2}. \qedhere

\end{proof}
\section{Exchange graph construction}\label{sec:ex-const}

Recall that $\sigma_H  = \avgH$ is the sparsity of $H$-minor-free graphs.  In this section, we will show that $H$-minor-free graphs have divisible exchange graphs for the FVS problem with $\ce = O(\sigma_H)$. We construct the exchange graph in three steps:

\begin{description}
\item[Step 1] We  delete all edges in $G$ that are incident to vertices of $\opt \cap \loc$. We then remove all components of $G$ that do not contain any solution vertex. Note that the removed components are acyclic.
\item[Step 2] Let $v \in V(G)\setminus (O\cup L)$ be a non-solution vertex of degree at most $2$. Recall that isolated vertices are removed in Step 1. If $v$ has degree $1$, we simply remove $v$ from $G$. If $v$ has degree 2, we remove $v$ from $G$ and add an edge between two neighbors of $v$ in $G$. We can view this step in terms of contraction: we contract edges that have an endpoint that is not a solution vertex and has degree at most two until there is no such an edge left. Since $\loc$ and $\opt$ are feedback vertex sets of $G$, every cycle after the contraction must contain a vertex in $\loc$ and a vertex in $\opt$.  Since edges incident to vertices of $\opt \cap \loc$ are removed, there is no self-loop after this step. 
\item[Step 3] We keep the graph simple by removing all but one edge in each maximal set of parallel edges. 
\end{description}

Let $K$ be the resulting graph.  Since $K$ is a minor of $G$, it is $H$-minor-free and thus, divisible. It remains show that $K$ satisfies three properties in Definition~\ref{def:exch}. Property (1) is obvious because we never delete a vertex in $L\cup O$ from $K$.  To show property (3), let $C$ be a cycle of $G$. If any edge of $C$ is removed in Step 1, $C$ must contain a vertex in $\opt \cap \loc$; implying (3a).  Thus, we can assume that no edge of $C$ is deleted after Step 1. Since contraction does not destroy cycles, after the contraction in Step 2, there is a cycle $C'$ such that $V(C')\subseteq V(C)$. If $|V(C')| = 2$ ($C'$ is a cycle of two parallel edges), then (3b) holds. Thus, we can assume that every edge of $C'$ remains  intact after removing parallel edges. But that implies (3c) since we never remove solution vertices from $G$. Thus, $K$ satisfies property (3). 

The most challenging part is showing property (2) in Definition~\ref{def:exch}, that is, $|V(K)| \leq O(\sigma_H)(|\loc| + |\opt|)$. By Step 2, we have:

\begin{observation} \label{obs:K-Steiner-deg-3} 
Every Steiner vertex of $K$ has degree at least 3.
\end{observation}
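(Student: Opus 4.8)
The plan is to read the degree bound directly off the termination condition of Step 2 and then verify that Step 3 cannot spoil it. By construction, Step 2 repeatedly suppresses every non-solution (Steiner) vertex of degree at most two, and it halts only when no such vertex remains; hence, in the multigraph produced at the end of Step 2, every Steiner vertex has degree at least three (counting multiplicities). Since Steps 1 and 2 never delete a solution vertex, the Steiner vertices of $K$ are exactly the Steiner vertices surviving Step 2. The only operation left is Step 3, which deletes edges from maximal bundles of parallel edges. So the whole content of the statement reduces to showing that Step 3 removes no edge incident to a Steiner vertex, i.e.\ that no Steiner vertex of the post-Step-2 multigraph is incident to a pair of parallel edges (self-loops having already been ruled out at the end of Step 2).

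The heart of the argument is therefore the following claim: after Step 2, no Steiner vertex $v$ is incident to two parallel edges. Suppose for contradiction that $v$ has two parallel edges to a common neighbor $u$. Each edge of the post-Step-2 multigraph corresponds to a path in $G$ whose internal vertices are precisely the degree-two Steiner vertices suppressed along it, and each suppressed vertex is internal to exactly one current edge, so distinct edges correspond to internally disjoint such paths. Thus the two parallel $vu$-edges trace back to two internally disjoint $v$--$u$ paths $P_1, P_2$ in $G$, and $C := P_1 \cup P_2$ is a cycle of $G$. Every internal vertex of $P_1$ and $P_2$ is a Steiner vertex, and $v$ itself is a Steiner vertex, so the only vertex of $C$ that can lie in $\opt \cup \loc$ is $u$.

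Now I invoke feasibility. Since $\opt$ and $\loc$ are both feedback vertex sets, the cycle $C$ must contain a vertex of $\opt$ and a vertex of $\loc$. As $u$ is the only possible solution vertex on $C$, this forces $u \in \opt \cap \loc$. But Step 1 deletes all edges incident to vertices of $\opt \cap \loc$, so the edges of $C$ at $u$ could not have survived into the post-Step-2 multigraph, contradicting the existence of the parallel $vu$-edges. Hence no Steiner vertex is incident to parallel edges, Step 3 leaves the degree of every Steiner vertex unchanged, and each such degree is already at least three by Step 2, which gives the claim.

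I expect the main obstacle to be precisely this interaction with Step 3: the degree-at-least-three property after Step 2 is immediate from its stopping rule, but one must rule out parallel edges at Steiner vertices, since otherwise the simplification in Step 3 could drop a degree-three Steiner vertex down to degree two. The clean way to close this gap is the cycle-tracing argument above, which reuses exactly the two ingredients already available, namely that $\opt$ and $\loc$ are feedback vertex sets and that Step 1 has stripped every edge incident to $\opt \cap \loc$.
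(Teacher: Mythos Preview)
Your argument is correct. The paper's own justification is the single phrase ``By Step 2,'' treating the observation as immediate; it does not discuss whether the parallel-edge removal in Step 3 could drop a Steiner vertex's degree below three. You correctly identify this as the only nontrivial point and resolve it by the same mechanism the paper uses one sentence earlier to rule out self-loops: a pair of parallel edges at a Steiner vertex $v$ uncontracts to a cycle in $G$ whose only possible solution vertex is the common neighbor $u$, forcing $u\in\opt\cap\loc$ and contradicting Step 1. So your proof is not a different route but a more careful walk along the same one; it makes explicit a small verification the paper leaves implicit.
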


Since $\opt \cup \loc$ is a feedback vertex set of $K$, $K\setminus (\opt \cup \loc)$ is a forest $F$ containing only Steiner vertices. For each tree $T$ in $F$, we define the \emph{degree} of $T$, denoted by $\deg_K(T)$, to be the number of edges in $K$ between $T$ and $\opt \cup \loc$.  

\begin{claim}\label{clm:size-T-vs-deg-T}
$|V(T)| \leq \deg_K(T)$.
\end{claim}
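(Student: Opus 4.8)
The plan is to exploit the fact that $T$ is a tree consisting only of Steiner vertices, each of which has degree at least $3$ in $K$ (Observation \ref{obs:K-Steiner-deg-3}). The key idea is a counting argument comparing the total degree of vertices of $T$ inside $K$ against the degree of $T$ as defined (edges leaving $T$ to $\opt\cup\loc$), using the tree structure to control the number of internal edges.

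First I would set up the notation: let $t = |V(T)|$, so $T$ has exactly $t-1$ edges (all of which lie in $K$). Summing $\deg_K(v)$ over all $v\in V(T)$, every edge internal to $T$ is counted twice and every edge from $T$ to $\opt\cup\loc$ is counted once — note there can be no edges from $T$ to another tree $T'$ of $F$, since $F$ is a forest and $T$ is a connected component. Hence
\begin{equation*}
\sum_{v\in V(T)}\deg_K(v) = 2(t-1) + \deg_K(T).
\end{equation*}
On the other hand, by Observation \ref{obs:K-Steiner-deg-3} every vertex of $T$ has degree at least $3$ in $K$, so the left-hand side is at least $3t$. Combining, $3t \le 2(t-1) + \deg_K(T)$, which rearranges to $t \le \deg_K(T) - 2 \le \deg_K(T)$, giving the claim (in fact a slightly stronger bound).

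The one point that needs care — and the only real obstacle — is justifying that no edge of $K$ runs between $T$ and a different tree $T'$ of $F$; this is exactly what makes the edge count clean. This follows because $F = K \setminus (\opt\cup\loc)$ and the trees of $F$ are precisely its connected components, so any edge between two of them with both endpoints Steiner would already be an edge of $F$, contradicting that they are distinct components. A secondary sanity check is the degenerate case where $T$ is a single vertex ($t=1$): then there are no internal edges, $\deg_K(T) \ge 3$ by Observation \ref{obs:K-Steiner-deg-3}, and $1 \le 3$ holds. So the argument is essentially a one-line handshake computation once the no-cross-edge observation is in place.
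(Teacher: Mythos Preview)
Your proof is correct, and in fact slightly slicker than the paper's. Both are degree-counting arguments, but the paper takes a small detour: it forms an auxiliary graph $T'$ by attaching to $T$ every solution vertex adjacent to it, argues that $T'$ is a tree (this step requires the observation that no vertex of $(\opt\cup\loc)\setminus(\opt\cap\loc)$ can have two neighbors in $T$, since otherwise some cycle would avoid $\opt$ or avoid $\loc$; and that $\opt\cap\loc$ is isolated), identifies the leaves of $T'$ with the edges leaving $T$, and then applies the standard fact that in a tree whose internal vertices all have degree at least $3$ the number of internal vertices is at most the number of leaves. Your handshake computation in $K$ restricted to $V(T)$ bypasses the auxiliary tree entirely: you never need to know whether a given solution vertex has one or several edges into $T$, only that every edge leaving $T$ lands in $\opt\cup\loc$, which is immediate from $T$ being a component of $F$. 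So your route uses strictly less of the ambient structure and yields the mildly sharper bound $|V(T)|\le \deg_K(T)-2$. The paper's route, on the other hand, makes explicit that each solution neighbor of $T$ is attached by a single edge, which is morally the content of Observation~\ref{obs:K'-simple} used shortly afterward.
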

\begin{proof}
Let $T'$ be obtained from $T$ by adding every edge $uv$ to $T$ where $u \in V(T)$ and $v \in O\cup L$. Observe that no vertex in $(O\cup L)\setminus (O\cap L)$ can be adjacent to more than one vertex in $T$ since otherwise, there would be a cycle that contains vertices from $L$ or $O$ only, contradicting that $L$ and $O$ are feedback vertex sets.  Since vertices in $L\cap O$ are isolated in $K$,  $T'$ must be a tree.  Let $\ell(T')$ be the number of leaves of $T'$.  By Step 2, leaves of $T'$ are vertices in $O\cup L$. Thus, $\deg_K(T) = \ell(T')$. Since every internal vertices of $T'$ has degree at least three, $|V(T)| \leq \ell(T')$ which implies the claim.
\end{proof}

We contract each tree $T$ of $F$ into a single Steiner vertex $s_T$. Let $K'$ be the resulting graph. We observe that:
\begin{observation}\label{obs:K'-simple} 
$K'$ is simple. 
\end{observation}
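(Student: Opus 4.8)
The plan is to argue that contracting each tree $T$ of the forest $F$ to a single Steiner vertex $s_T$ cannot create self-loops or parallel edges in $K'$. First I would dispense with self-loops: a self-loop at $s_T$ would arise only from an edge of $K$ with both endpoints in the same tree $T$, but $T$ is a tree and hence induces no such edge, so no self-loops appear. The substance is ruling out parallel edges, of which there are three types to consider in $K'$: (a) two parallel edges between two contracted Steiner vertices $s_T$ and $s_{T'}$ with $T \neq T'$; (b) two parallel edges between a contracted Steiner vertex $s_T$ and a solution vertex $w \in \opt \cup \loc$; (c) two parallel edges between two solution vertices $w_1, w_2$, but these cannot exist because $K$ is already simple (Step~3) and contraction of the trees does not touch edges between solution vertices.

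For case (b), suppose $s_T$ and $w$ are joined by two parallel edges in $K'$. Unwinding the contraction, this means there are two distinct vertices $u_1, u_2 \in V(T)$ with $u_1 w, u_2 w \in E(K)$ (or one such vertex with a double edge to $w$, which is impossible since $K$ is simple). Then the tree path in $T$ from $u_1$ to $u_2$ together with the edges $u_1 w$ and $w u_2$ forms a cycle in $K$ all of whose vertices except $w$ are Steiner vertices. If $w \in \opt$, this cycle avoids $\loc$ entirely — wait, it contains $w\in\opt$ but no vertex of $\loc$, contradicting that $\loc$ is a feedback vertex set of $K$; symmetrically if $w \in \loc$ it contradicts that $\opt$ is a feedback vertex set. (If $w \in \opt \cap \loc$, then $w$ is isolated in $K$ by Step~1 and has no such edges at all.) So case (b) is impossible. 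Notice this is essentially the same argument already used in the proof of Claim~\ref{clm:size-T-vs-deg-T}, where it was shown that no vertex of $(\opt\cup\loc)\setminus(\opt\cap\loc)$ is adjacent to more than one vertex of a single tree $T$; I would simply invoke that observation directly.

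For case (a), suppose $s_T$ and $s_{T'}$ are joined by two parallel edges, coming from edges $x_1 y_1$ and $x_2 y_2$ of $K$ with $x_1, x_2 \in V(T)$ and $y_1, y_2 \in V(T')$. If $x_1 = x_2$ and $y_1 = y_2$ this contradicts simplicity of $K$; otherwise the path in $T$ from $x_1$ to $x_2$, the edge $x_2 y_2$, the path in $T'$ from $y_2$ to $y_1$, and the edge $y_1 x_1$ together form a cycle in $K$ consisting entirely of Steiner vertices, contradicting that $\opt \cup \loc$ is a feedback vertex set of $K$. Hence $K'$ is simple. The main obstacle is purely bookkeeping: making sure every configuration of parallel edges is traced back correctly through the contraction to a monochromatic or solution-light cycle in $K$, and handling the degenerate subcases ($\opt\cap\loc$ vertices, coincident endpoints) without gaps; the underlying feedback-vertex-set argument is routine once the case analysis is laid out.
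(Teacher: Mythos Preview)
Your proof is correct and uses the same core argument as the paper (a solution vertex adjacent to two vertices of one tree $T$ would create a cycle in $K$ meeting only one of $\opt,\loc$); the paper just gives the one-line version without your explicit case split. Note that your case~(a) is actually vacuous: distinct trees $T,T'$ are distinct connected components of the induced forest $F=K\setminus(\opt\cup\loc)$, so there is no edge of $K$ between them at all and hence no edge $s_T s_{T'}$ in $K'$.
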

\begin{proof}
Since every cycle of $K$ must contain a vertex from $\loc$ and a vertex from $\opt$, there cannot be any solution vertex in $K$ that is adjacent to more than one vertex of a tree $T$ of $F$. So there cannot be parallel edges in $K'$.
\end{proof}

To bound the size of $K'$, we need the following structural lemma. We remark that this lemma holds for general graphs. 

\begin{lemma}\label{lm:structure}
For a graph $G$ and any two disjoint nonempty vertex subsets $A$ and $B$, let $D = V(G)\setminus (A\cup B)$. If (i) $D$ is an independent set, (ii) every vertex in $D$ has degree at least 3 in $G$ and (iii) every cycle $C$ contains  at least one vertex in $A$ and at least one vertex in $B$, then $|V(G)| \leq 2(|A| + |B|)$.
\end{lemma}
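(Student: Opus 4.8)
The plan is to bound $|V(G)| = |A| + |B| + |D|$ by controlling $|D|$, the set of Steiner-like vertices. Since $D$ is independent and every vertex of $D$ has degree at least $3$, each vertex of $D$ sends at least three edges into $A \cup B$. I would like to charge vertices of $D$ to edges inside (or incident to) $A \cup B$, but to do that cleanly I would first pass to a graph that captures the cycle structure. Consider the bipartite-like incidence between $D$ and $A \cup B$: contract nothing, but instead observe that because every cycle meets both $A$ and $B$, there is no cycle using only vertices of $(A \cup B) \cup D$ that avoids one side; more usefully, the subgraph $G[(A\cup B)\cup D]$ has the property that $D$-vertices behave like subdivision vertices of a graph on $A \cup B$. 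Formally, replace each vertex $v \in D$ of degree $d_v \ge 3$ together with its incident edges by a "star gadget" — or, to get a genuine minor, for each $v \in D$ pick two of its neighbors and suppress $v$ if $d_v = 2$ (which does not happen here), so instead I keep $D$ but think of $G$ as a graph $G'$ on vertex set $A \cup B$ where each $v \in D$ contributes a hyperedge; the acyclicity constraint says this structure is "forest-like".

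The cleaner route: let $F = G - (A \cup B)$, which by hypothesis is an edgeless graph (since $D$ is independent), so $F$ is a forest trivially and each component is a single vertex. That is too weak, so the real content must come from cycles through $A \cup B$. Here is the step I expect to be the crux. Build an auxiliary multigraph $G^\star$ on vertex set $A \cup B$ as follows: for each $v \in D$, since $\deg(v) \ge 3$, arbitrarily group its $\ge 3$ incident edges and add to $G^\star$ a set of edges on $N(v) \subseteq A \cup B$ forming, say, a spanning tree on $N(v)$ (so $\deg(v) - 1 \ge 2$ new edges per Steiner vertex), plus record that $v$ "owns" these edges. I claim $G^\star$ has no cycle entirely within $A$ and none entirely within $B$: a cycle in $G^\star$ lifts to a closed walk in $G$, hence (after cleaning) a cycle in $G$, which by hypothesis (iii) must hit both $A$ and $B$ — contradiction if the $G^\star$-cycle lies in one side. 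Therefore $G^\star[A]$ and $G^\star[B]$ are each forests, so each has at most $|A| - 1$ and $|B| - 1$ edges respectively. Meanwhile every edge of $G^\star$ is owned by some $v \in D$ and lies in $G^\star[A]$, $G^\star[B]$, or crosses between $A$ and $B$; I would next need to also bound the crossing edges and the total count of owned edges, which is $\sum_{v \in D}(\deg(v) - 1) \ge \sum_{v\in D} \tfrac{2}{3}\deg(v) \ge 2|D|$ using $\deg(v)\ge 3$.

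So the accounting I aim for is: $2|D| \le \sum_{v \in D} (\deg(v)-1) = |E(G^\star)| = e(G^\star[A]) + e(G^\star[B]) + e_{A,B}(G^\star)$, and I must show the right-hand side is at most roughly $2(|A|+|B|)$. The forest bounds give $e(G^\star[A]) + e(G^\star[B]) \le |A| + |B|$; the genuinely delicate part is the crossing edges $e_{A,B}(G^\star)$, since a priori these could be many. The fix is to choose the gadget for each $v \in D$ more carefully — rather than a spanning tree on all of $N(v)$, contract $v$ together with a carefully chosen neighbor and route edges so that the crossing multigraph between $A$ and $B$ is also forced to be a forest (any crossing cycle again lifts to a $G$-cycle, but one must ensure it hits both sides, which a crossing cycle automatically does, so in fact crossing cycles are \emph{allowed}). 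Hence I should not contract into $A\cup B$ blindly; instead I would keep $D$ and argue directly on $G$: delete one edge per Steiner vertex to kill all cycles through $D$ that stay on one side, leaving a graph whose every block is small. I expect the main obstacle to be exactly this — pinning down the right auxiliary graph so that \emph{both} the one-sided-cycle-freeness (giving the forest bounds) and a clean bound on crossing edges hold simultaneously; once the auxiliary multigraph is correctly defined, the inequality $2|D| \le 2(|A|+|B|)$, hence $|V(G)| \le 2(|A|+|B|)$, should follow by the forest edge counts together with handling $A\cap B$-type degeneracies and the possibility that $A$ or $B$ has isolated vertices.
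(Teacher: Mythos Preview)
Your auxiliary-graph idea and the forest bounds on $G^\star[A]$ and $G^\star[B]$ are the right ingredients, but as you yourself flag, the crossing edges $e_{A,B}(G^\star)$ remain uncontrolled, and your proposal never resolves this. The crossing bipartite piece cannot be forced to be a forest via hypothesis~(iii): a crossing cycle lifts to a cycle in $G$ meeting both $A$ and $B$, which is perfectly consistent with~(iii). So the chain $2|D|\le |E(G^\star)|\le e(G^\star[A])+e(G^\star[B])+e_{A,B}(G^\star)$ stalls exactly where you say it does, and the final paragraph does not supply the missing step.

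The idea you are missing, which the paper supplies, is a pigeonhole split of $D$ that eliminates crossing edges altogether. Since each $v\in D$ has degree at least $3$ into $A\cup B$, it has at least two neighbors in $A$ or at least two in $B$; write $D=D_A\cup D_B$ accordingly. Now build an auxiliary multigraph $H_A^*$ on vertex set $A$ by adding, for each $v\in D_A$, a \emph{single} edge between two of its $A$-neighbors. Distinct $v$'s give distinct multiedges, so $|E(H_A^*)|=|D_A|$; and any cycle in $H_A^*$ lifts (each edge replaced by a path through its distinct owner $v$) to a cycle of $G$ lying entirely in $A\cup D$, contradicting~(iii). Hence $H_A^*$ is a forest, $|D_A|\le |A|$, and symmetrically $|D_B|\le |B|$, giving $|D|\le|A|+|B|$ and $|V(G)|\le 2(|A|+|B|)$. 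Your route can in fact be repaired along its own lines---choose each spanning tree on $N(v)$ to be a tree on $N(v)\cap A$, a tree on $N(v)\cap B$, and at most one crossing edge, so that $e_{A,B}(G^\star)\le|D|$ and the inequality becomes $2|D|\le(|A|-1)+(|B|-1)+|D|$---but the paper's one-edge-per-vertex version is cleaner and avoids the issue entirely.
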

\begin{proof}
 We remove every edge that only has endpoints in $A\cup B$ and let the resulting graph be $G'$. Then $G'$ is a bipartite graph with $A\cup B$ in one side and $D$ in the other side since $D$ is an independent set.  Let $D_A$ ($D_B$) be the subset of $D$ containing every vertex that has at least two neighbors in $A$ ($B$). Since every vertex of $D$ has degree at least $3$, we have $D_A\cup D_B = D$. 

Let $H_A$ be the subgraph of $G'$ induced by $A\cup D_A$. Then $H_A$ is acyclic since otherwise every cycle of $H_A$ would correspond to a cycle in $G$ that does not contain any vertex in $B$. We now construct a graph $H_A^*$ on vertex set $A$. For each vertex $v \in D_A$, we choose any two neighbors $x$ and $y$ of $v$ in $A$ and add an edge between $x$ and $y$ in $H_A^*$. By construction, there is a one-to-one mapping between edges of $H_A^*$ and vertices of $D_A$.

Since $H_A$ is acyclic, $H_A^*$ is also acyclic. Thus, $|E(H_A^*)| \leq V(H_A^*) = |A|$. That implies $|D_A| \leq |A|$. By a similar argument, we can show that $|D_B| \leq |B|$. Thus, $|D| = |D_A\cup D_B| \leq |A| + |B|$ which implies the lemma.
\end{proof}

Let $Z$ be an arbitrary component of $K'$ that contains at least one Steiner vertex. 
Then two sets $V(Z) \cap \opt $ and $V(Z) \cap \loc$ must be disjoint since any vertex in $\opt\cap \loc$ is isolated in $K'$. If any of two sets $V(Z) \cap \opt $  and $V(Z) \cap \loc $, say $V(Z) \cap \opt $, is empty, then $Z$ must be a tree. By Step 2, leaves of $Z$ are in $L$. Thus, $|V(Z)| \leq |V(Z)\cap \loc|$ since internal vertices of $Z$ have degree at least $3$. Otherwise, both  $V(Z) \cap \opt $  and $V(Z) \cap \loc $ are non-empty. Let $X$ be the set of Steiner vertices in $Z$. By the construction of $K'$, $X$ is an independent set of $Z$. By Observation~\ref{obs:K-Steiner-deg-3}, every vertex of $X$ has degree at least 3.  
So we can apply Lemma~\ref{lm:structure} to $X$, $V(Z) \cap \opt$ and $V(Z) \cap \loc$, and obtain $|V(Z)| \le 2(|V(Z) \cap \opt| + |V(Z) \cap \loc|) = 2(|V(Z) \cap \opt| + |V(Z) \cap (\loc \setminus \opt)|)$. Note that this bound holds trivially if $Z$ does not contain any Steiner vertex. In both cases, $|V(Z)| \leq 2(|V(Z) \cap \opt| + |V(Z) \cap (\loc \setminus \opt)|)$. Summing over all components of $K'$, we have $|V(K')| \le 2(|V(K') \cap \opt| + |V(K') \cap (\loc\setminus \opt)|) \le 2 (|\opt| + |\loc|)$.
Since $K'$ is a minor of $G$, it is also $H$-minor-free. Thus, 
$|E(K')| =  O(\sigma_H|V(K')|) = O(\sigma_H)(|\opt| + |\loc|)$. We now ready to bound the size of $V(K)$. We have:
\begin{equation}
\begin{split}
|V(K)\setminus (\opt  \cup \loc)|  &= \sum_{T\in F}|V(T)|  \leq \sum_{T\in F}\deg_K(T)  \quad \mbox{(Claim~\ref{clm:size-T-vs-deg-T})}\\
&= \sum_{T\in F}\deg_{K'}(s_T)   \\
&\leq |E(K')| \quad  \mbox{ ( since $\{s_T| T\in F\}$ is an independent set)}\\
&= O(\sigma_H)(|\opt| + |\loc|) 
\end{split}
\end{equation}
That implies $V(K) \leq  O(\sigma_H)(|\opt| + |\loc|)$. Thus $K$ satisfies property (2) in Definition~\ref{def:exch} with $\ce = O(\sigma_H)$.

\section{Negative results}\label{sec:negative}

In this section, we show some negative results for the FVS problem and two closely related problems: odd cycle tranversal and subset feedback vertex set.  The odd cycle transversal (also called {\em bipartization}) problem asks for a minimum set of vertices in an undirected graph whose removal results in a bipartite graph.  Given an undirected graph and a subset $U$ of vertices, the subset feedback vertex set problem asks for a minimum set $S$ of vertices such that after removing $S$ the resulting graph contains no cycle that passes through any vertex of $U$. 

We first show that the FVS problem is APX-hard in $1$-planar graphs.  A graph is {\em 1-planar} if it can be drawn in the Euclidean plane such that every edge has at most one crossing. 

\begin{theorem}\label{thm:apx-hard-fvs}
Given a graph $G$, we can construct a 1-planar graph $H$ in polynomial time, such that $G$ has a feedback vertex set of size at most $k$ if and only if $H$ has a feedback vertex set of size at most $k$.
\end{theorem}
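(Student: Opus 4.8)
The plan is to give a gadget reduction from \textsc{Feedback Vertex Set} in general graphs (which is APX-hard) to \textsc{Feedback Vertex Set} in $1$-planar graphs. The starting point is that every graph admits a drawing in the plane in which each edge is a curve and crossings occur only at isolated points where exactly two edges meet; we may further assume (by a tiny perturbation) that no two crossings coincide and no crossing lies on a vertex. The obstruction to $1$-planarity is that a single edge may be crossed many times. So the first step is to fix such a drawing of $G$ and, walking along each edge $uv$ from $u$ to $v$, subdivide it at every crossing point, replacing the original edge by a path. This already yields a drawing in which each \emph{edge} of the new graph carries at most one crossing, but it is not yet a valid reduction because subdividing an edge changes feedback vertex sets: a subdivision vertex of degree $2$ can be used ``for free'' to kill a cycle, and more importantly a single crossing point has been turned into one degree-$2$ vertex shared by two different original edges, which is wrong.

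The fix, carried out in the second step, is the standard planarization-style gadget: at each crossing point of the original drawing, instead of a single subdivision vertex, insert a small rigid gadget that (a) can be drawn with at most one crossing per edge, and (b) forces any feedback vertex set to behave as if the two crossing edges simply passed over each other. Concretely, replace a crossing of edges $e_1=u_1v_1$ and $e_2=u_2v_2$ by a gadget on a constant number of new vertices that keeps the four ``stub'' connections $u_1,v_1,u_2,v_2$ in the correct cyclic order around the crossing, contains no short cycle that could be cheaply hit, and such that no optimal solution ever picks a gadget vertex (e.g.\ make each gadget vertex have the property that any cycle through it also passes through one of the ``through-strands'', so picking a strand vertex dominates picking a gadget vertex). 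A clean way to engineer this is to route $e_1$ straight through and replace $e_2$ by two edges that each cross $e_1$'s strand \emph{once} while adding a couple of degree-$2$ subdivision vertices on $e_2$ that are then suppressed in the FVS count by making them into triangles or by a ``guard'' vertex argument. I would pick whichever concrete gadget makes the two claims $\mathrm{OPT}(H)\le\mathrm{OPT}(G)$ and $\mathrm{OPT}(G)\le\mathrm{OPT}(H)$ shortest to verify.

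The third step is the two-directional correctness argument. For the forward direction, given a feedback vertex set $S$ of $G$ with $|S|\le k$, I keep the same vertices in $H$ (the original vertices of $G$ all survive) and argue that $S$ is still a feedback vertex set of $H$: any cycle in $H$ projects to a closed walk in $G$ that must use at least one original edge nontrivially, hence corresponds to a cycle of $G$ hit by $S$ --- here the gadget's ``rigidity'' (no cycle lives entirely inside gadgets or along a single strand) is exactly what is needed. For the reverse direction, given a feedback vertex set $S'$ of $H$ with $|S'|\le k$, I first argue we may assume $S'$ contains no gadget vertex, by exhibiting for each gadget vertex a strand/original vertex whose selection covers a superset of the cycles --- a local exchange that does not increase $|S'|$ --- and then that the resulting set, intersected with $V(G)$, is a feedback vertex set of $G$ of size at most $k$. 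Finally I note the construction is polynomial-time (the number of crossings is at most $|E(G)|^2$, the gadget is $O(1)$ size) and the produced $H$ is $1$-planar by construction, so the reduction is approximation-preserving up to additive $O(1)$ per-gadget slack that can be absorbed; combined with the APX-hardness of \textsc{FVS} in general graphs this gives the theorem. The main obstacle is designing the crossing gadget so that both the ``no optimal solution touches the gadget'' exchange argument and the ``cycles are preserved in both directions'' argument go through simultaneously with only a constant-size gadget --- essentially all the content of the proof is in verifying these two properties for a single well-chosen gadget, and I would spend most of the write-up on exactly that verification.
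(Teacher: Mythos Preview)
Your proposal has a genuine gap rooted in a misunderstanding of what subdivision does here. You write that after subdividing each edge at its crossing points ``a single crossing point has been turned into one degree-$2$ vertex shared by two different original edges.'' That is not what happens: subdivision places new vertices \emph{on one edge only}; the other edge still passes over it as a crossing in the drawing. Concretely, the paper fixes a straight-line drawing (so any two edges cross at most once), and for each edge $e$ inserts subdivision vertices \emph{between} consecutive crossings so that every resulting sub-edge carries at most one crossing. No vertex is ever shared between two original edges, and no gadget is needed.

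Your other worry---that a degree-$2$ subdivision vertex can be used ``for free''---is also unfounded. Subdivision induces a bijection between cycles of $G$ and cycles of $H$: every cycle of $H$ arises from a unique cycle of $G$ by subdividing some of its edges. Hence any feedback vertex set $S\subseteq V(G)$ is already a feedback vertex set of $H$, giving $\mathrm{OPT}(H)\le\mathrm{OPT}(G)$. Conversely, if $S'$ is a feedback vertex set of $H$ and $v\in S'$ is a subdivision vertex lying on the path replacing an original edge $e=xy$, then every cycle of $H$ through $v$ uses the whole $x$--$y$ path, so replacing $v$ by $x$ (say) still hits all those cycles. Iterating pushes $S'$ into $V(G)$ without increasing its size, giving $\mathrm{OPT}(G)\le\mathrm{OPT}(H)$.

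This is exactly the paper's proof: plain subdivision plus the endpoint-replacement argument. Your gadget machinery is not only unnecessary, it is also not a proof as written---you never exhibit a concrete gadget, and designing one with the ``no optimal solution touches the gadget'' property while keeping the graph $1$-planar is nontrivial (and, as it turns out, avoidable).
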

\begin{proof} 
Consider a drawing of $G$ on the plane where each pair of edges can cross at most once. For each crossed edge $e$ in $G$, we subdivide $e$ into edges so that there is exactly one crossing per new edge. Let $H$ be the resulting graph. By construction, graph $H$ is 1-planar.

Let $n$ be the size of $G$. Since there are at most $O(n^2)$ crossings per edge in the drawing, the size of $H$ is at most $O(n^4)$. Sine we only subdivide edges, there is a one-to-one mapping between cycles of $G$ and cycles of $H$. It is straightforward to see that any feedback vertex set of $G$ is also a feedback vertex set of $H$. 

Let $S$ be a feedback vertex set of $H$. If $S \subseteq V(H)\cap V(G)$, then it is also a feedback vertex set for $G$. Otherwise, let $v \in V(H)\setminus V(G)$ be a vertex in $S$. Then $v$ must be added to subdivide an edge, say $e$, in $G$.  We remove $v$ from $S$ and add an arbitrary endpoint of $e$ in $G$ to $S$. Then $S$ is still a feedback vertex set for $H$. We repeat this process until $S$ is a subset of $V(H)\cap V(G)$. Observe that $S$ is a feedback vertex set of size at most $k$ for $G$. Thus, the lemma holds. 
\end{proof}

Since the FVS problem is APX-hard in general graphs (by an approximation preserving reduction~\cite{Karp72} from vertex cover problem, which is APX-hard~\cite{DS05}), Theorem~\ref{thm:apx-hard-fvs} implies that FVS is APX-hard in 1-planar graphs.

To show that simple local search cannot give a constant approximation for the odd cycle transversal problem and the subset feedback vertex set problem, we construct a counter-example from a  $k \times k$ grid as shown in Figure~\ref{fig: example}.

\begin{figure}
\centering
\includegraphics[scale = 1]{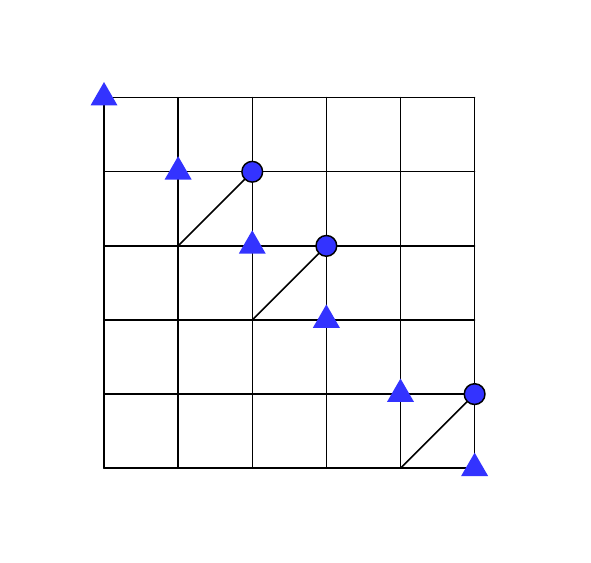}
\includegraphics[scale = 1]{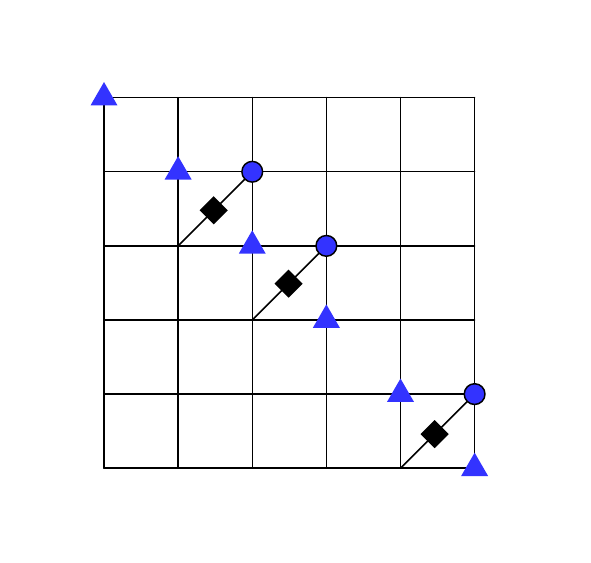}
\caption{Counterexamples for local search on odd cycle transversal and subset feedback vertex set. Circle vertices represent vertices of the optimal solution, and triangle vertices represent vertices of the local search solution. The grid could be arbitrarily large. We add one edge in some diagonal cells of the grid.
Left: a counterexample for the odd cycle transversal problem. Since any grid is bipartite and does not contain any odd cycle, any odd cycle in the example must contain an edge in the diagonal cell. 
All the vertices in the diagonal, represented by triangles, give a solution that is locally optimal, that is, we cannot improve this solution by changing a small number of vertices. This is because each triangle vertex and each new edge, together with some other edges, can form at least one odd cycle in the graph. For a constant $c$ that is smaller than the size of optimal solution, if we remove $c$ triangle vertices, say $V'$, in the locally optimal solution, there will be $c$ vertex-disjoint odd cycles in the resulting graph, each of which contains one removed triangle. Thus, there is no subset of size less than $c$ that can replace $V'$. Then the ratio between the two solutions could be arbitrarily big if the gird is arbitrarily big and the number of added diagonal edges is super-constant and sublinear to the size of the diagonal.
Right: a counterexample for the subset feedback vertex set problem. The diamonds represent vertices in the given set $U$. Similarly, any cycle through a given vertex must contain the two edges in the diagonal cell. By the same reason, the local search solution cannot be improved.
}
\label{fig: example}
\end{figure}

\newpage

\bibliographystyle{plain}
\bibliography{local}
 
\end{document}